\newcommand{\CC}{{\mathbb{C}}}
\newcommand{\QQ}{{\mathbb{Q}}}
\newcommand{\F}[2]{{F_{#1}^{#2}}}
\def\F{{\mathcal F}}
\def\res{{\mathrm{res}}}
\newcommand{\p}{{\partial}}
\newcommand{\ZZ}{\mathbb{Z}}
\newcommand{\bt}{{\bf t}}
\newtheorem{convention}{Convention}[section]
\newtheorem{theorem}{Theorem}[section]
\newtheorem{proposition}[theorem]{Proposition}
\newtheorem{lemma}[theorem]{Lemma}
\newtheorem{corollary}[theorem]{Corollary}
\newtheorem{remark}[theorem]{Remark}
\def\&{\vspace{-5pt}&}
\numberwithin{equation}{section}
\begin{document}

\title{
On B type family of Dubrovin--Frobenius manifolds and their integrable systems
}

\author{Alexey Basalaev}
\address{A. Basalaev:\newline Faculty of Mathematics, HSE University, Usacheva str., 6, 119048 Moscow, Russian Federation, and \newline
Igor Krichever Center for Advanced Studies,
Skolkovo Institute of Science and Technology,
Bolshoy Boulevard 30, bld. 1
Moscow, Russian Federation, 121205}
\email{a.basalaev@skoltech.ru}

\date{\today}

 \begin{abstract}
 According to D.Zuo and an unpublished work of M.Bertola, there is a two--index series of Dubrovin--Frobenius manifold structures associated to a B type Coxeter group. We study the relations between these structures for the different values of these indices. We show that part of the data of such Dubrovin--Frobenius manifold indexed by $(k,l)$  can be recovered by the $(k+r,l+r)$ Dubrovin--Frobenius manifold. Continuing the program of \cite{BDbN} we associate an infinite system of commuting PDEs to these Dubrovin--Frobenius manifolds and show that these PDEs extend the dispersionless BKP hierarchy.
 \end{abstract}
 \maketitle

 
\section{Introduction}

Introduced by B.Dubrovin in the early 90s, Dubrovin--Frobenius manifolds appeared to be important in the different areas of mathematics. The simplest examples of the Dubrovin--Frobenius manifold come from the invariants theory of the simple Coxter group. It was found by B.Dubrovin that the orbit space of every such group has polynomial Dubrovin--Frobenius manifold structure (cf. \cite{D1,D2}).

One of the important applications of the Dubrovin--Frobenius manifolds is in the study of integrable systems. B.Dubrovin and Y.Zhang associated an integrable hierarchy to any Dubrovin--Frobenius manifold given (\cite{DZ}, consider also the other constructions like \cite{B1,B2} and \cite{FGM},\cite{DVV}). 
For the case of simple Coxeter groups the hierarchies of Dubrovin--Zhang were investigated in details in \cite{DLZ}. In particular, the construction of Dubrovin--Zhang for $A_N$ Dubrovin--Frobenius manifold gave the Gelfand--Dikey hierarchy and $D_N$ --- respective Drinfeld--Sokolov hierarchy. For the $B_N$ case one only got the dispersionless Drinfeld--Sokolov hierarchy (see also \cite{LRZ}).

Completely different approach to the construction of an integrable hierarchy with the help of Dubrovin--Frobenius manifolds was introduced in \cite{BDbN}. There the authors assumed an infinite series of Dubrovin--Frobenius manifolds, satisfying some stabilization conditions, in order to construct an infinite system of commuting PDEs. Such stabilization conditions were found for $A$, $B$ and $D$ Dubrovin--Frobenius manifolds, giving dispersionless KP, dispersionless BKP and 1--component reduced 2--component BKP hierarchies respectively. The first important property of this new approach is that the flows are written immediately via the potentials of the Dubrovin--Frobenius manifolds. Another important property is that the compatibility of the PDEs was derived just from the associativity equation of the Dubrovin--Frobenius manifold.
The approach of \cite{BDbN} was later extended in \cite{B22} beyond the theory of Dubrovin--Frobenius manifold.

In this paper we deepen the study of B--type Dubrovin--Frobenius manifolds and their integrable systems.

\subsection{$B_{k,l}$ Dubrovin--Frobenius manifold}
For a $B_l$ Coxeter group and any $k$, s.t. $1 \le k \le l$, D. Zuo\footnote{D.Zuo acknowledges also the unpublished paper of M.Bertola} introduced in \cite{Z07} the structure of a Dubrovin--Frobenius manifold on $M := \CC^{l-1}\times \CC^\ast$. Denote it by $B_{k,l}$ later in the text. The potential of this structure is polynomial in all the variables except one, that comes in both positive and negative powers. Let $\F_{k,l}$ stand for this potential. Renumbering the variables we have
\[
    \F_{k,l} = \F_{k,l}(v^{k+1-l},\dots,v^{-1},v^{0},v^{1},\dots,v^k) \in \QQ[v^{k-l},\dots,v^k]\otimes \QQ[v^{k+1-l},(v^{k+1-l})^{-1}].
\]
The unit of this Dubrovin--Frobenius manifold is given then by $\frac{\p}{\p v^1}$ and the metric $\eta$ satisfies
\begin{equation}\label{eq: eta}
     \eta_{\alpha\beta} = \frac{\p^3 \F_{k,l}}{\p v^1 \p v^\alpha \p v^\beta}
    =
    \begin{cases}
        1/2 \quad & \text{ if } \alpha=k+1-l, \beta=0 \text{ or } \alpha=0, \beta=k+1-l,
        \\
        1/4(l-k) \quad & \text{ if } k-l \le \alpha,\beta \le -1 \text{ and } \alpha+\beta=k+1-l,
        \\
        1/4k \quad & \text{ if } 1 \le \alpha,\beta \le k \text{ and } \alpha+\beta=k+1,
        \\
        0 \quad & \text{otherwise}.
    \end{cases}
\end{equation}

The first theorem of ours is the following stabilization statement. Note that we distinguish between the upper and lower indices of the variables $v^\bullet$ and $v_\bullet$.

\begin{theorem}\label{theorem: stabilization}
    Fix $\alpha,\beta \in \ZZ$. For any $p \ge 1$
    we have the equality in $v_\bullet$
    \[
        \frac{\p^2 \F_{k,l}}{\p v^\alpha \p v^\beta} \mid_{v^\gamma = \eta^{\gamma\delta}v_\delta}
        =
        \frac{\p^2 \F_{k+p,l+p}}{\p v^\alpha \p v^\beta} \mid_{v^\gamma = \eta^{\gamma\delta}v_\delta}.
    \]
    whenever $k \le l$ are such that
    \begin{equation}\label{eq: stabilization conditions}
        \begin{aligned}
        & k \ge \alpha + \beta -1  &&\text{for} \qquad \alpha, \beta \ge 1,
        \\
        & k \ge \alpha, \quad \text{and} \quad l \ge k+1-\beta \qquad &&\text{for} \qquad \alpha \ge 1, \ \beta \le 0,
        \\
        & k \ge 2, \quad \text{and} \quad l \ge k+2-\alpha-\beta \qquad &&\text{for} \qquad \alpha, \beta \le 0.
        \end{aligned}
    \end{equation}
\end{theorem}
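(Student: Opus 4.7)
My plan is to use the Landau--Ginzburg (superpotential) representation of the $B_{k,l}$ Dubrovin--Frobenius structure from \cite{Z07}. There is a superpotential $\lambda_{k,l}(z)$ whose Laurent expansions at $z = \infty$ and at a second singularity produce the flat coordinates $v^\alpha$: those with $\alpha \ge 1$ arise from the expansion at the first singularity, and those with $\alpha \le 0$ from the expansion at the second. The anti-diagonal form of $\eta$ in \eqref{eq: eta} shows that lowering an index $v^\alpha \mapsto v_\alpha$ amounts to replacing a coefficient of $\lambda_{k,l}$ by the coefficient at the mirror position in the same expansion (up to the explicit rescaling by $1/(4k)$, $1/(4(l-k))$, or $1/2$).

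The first step is to introduce a ``universal'' superpotential $\lambda_\infty$ depending on an infinite family of lowered coordinates $\{v_\alpha\}_{\alpha\in\ZZ}$ and to verify that $\lambda_{k,l}$, after the substitution $v^\gamma = \eta^{\gamma\delta}v_\delta$, is obtained from $\lambda_\infty$ by truncating the Laurent expansions to the index range $k+1-l \le \alpha \le k$. This identification is the bridge between the families and should follow essentially by inspection from Zuo's construction.

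The second step is to re-express $\p^2 \F_{k,l}/\p v^\alpha \p v^\beta$ (in lowered variables) as a residue-type functional of $\lambda_\infty$ that depends only on a finite window of the $v_\bullet$. Starting from Dubrovin's residue formula for the three-point correlator $c_{\alpha\beta\gamma}$, one integrates once in the unit direction $\p/\p v^1$ and pins down the integration constants via the quasi-homogeneity of $\F_{k,l}$. The dependence on $k$ and $l$ enters only through the truncation window of $\lambda_{k,l}$, and the inequalities in \eqref{eq: stabilization conditions} are precisely what ensures that the monomials in $z$ appearing in $\p_\alpha \lambda \cdot \p_\beta \lambda / \lambda'$ sit in a range unaffected by the extra coefficients introduced in passing from $(k,l)$ to $(k+p,l+p)$: the shift by $p$ raises the degree at $\infty$ and the polar order at the second singularity both by $p$, and each of the three cases in \eqref{eq: stabilization conditions} corresponds to controlling the integrand at one of these two singularities.

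The principal obstacle will be the case analysis for the three index regimes: both indices positive, mixed sign, and both non-positive. In each case the relevant residue is localised at a different singularity of $\lambda_{k,l}$, and the bounds must be matched against the valuation of the integrand there. A further delicate point is the distinguished Laurent variable $v^{k+1-l}$, whose pairing with $v^0$ in \eqref{eq: eta} (rather than with a mirror coordinate of the same block) introduces an asymmetry between the two expansions and must be handled separately in each regime. Once $\lambda_\infty$ has been correctly identified, I expect the stabilization to reduce to a direct comparison of residues together with checking that the integration constants used to reconstruct $\F_{k,l}$ from its third derivatives are compatible across $(k,l)$ and $(k+p,l+p)$.
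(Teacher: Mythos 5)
Your proposal takes a genuinely different route from the paper (which never touches the superpotential: it works with the flat pencil, the generating--function identity \eqref{eq: g via generating function} for $g^{ij}$, the vanishing statements of Proposition~\ref{prop: 3rd derivatives zeros}, and a sector-by-sector analysis PP/NN/PN reducing everything to the stabilization of the matrices $g^{ab}_{(l)}$ and of the Jacobians $\p t^\bullet/\p y^\bullet$, as in Lemma~\ref{prop: g stabilization} and Corollary~\ref{corollary: Bk in Bkl}). A residue/Landau--Ginzburg argument is in principle available, since \cite{Z07} does supply a superpotential for $B_{k,l}$, so the strategy is not wrong-headed. But as written there is a genuine gap at the very first step: the claim that, after the substitution $v^\gamma=\eta^{\gamma\delta}v_\delta$, the superpotential $\lambda_{k,l}$ becomes a truncation of a universal $\lambda_\infty$ is asserted to ``follow essentially by inspection,'' whereas it is precisely the hard content of the theorem. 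The coefficients of $\lambda_{k,l}$ are (essentially) the invariants $y^1,\dots,y^l$, and the passage from these to the flat coordinates is a nonlinear triangular change of variables that itself depends on $(k,l)$ --- compare the formulas for $y^3,y^4$ in $B_{2,4}$ versus $y^3,y^4,y^5$ in $B_{2,5}$ in the examples. Proving that this change of variables stabilizes under $(k,l)\mapsto(k+p,l+p)$ is exactly what the paper's Lemma~\ref{prop: g stabilization} and its corollary for $\eta$ accomplish (via the generating function $P_l$ and the reindexing $\widetilde y^i=y^{k+1-i}$); without an equivalent statement your ``bridge between the families'' is unproved, and the subsequent residue comparison has nothing to stand on.

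The second step is also left at the level of intention: you defer the valuation bookkeeping at the two singularities with ``I expect,'' so the proposal never verifies that the three regimes of \eqref{eq: stabilization conditions} are actually the correct thresholds. Note moreover that the paper does not need to integrate a residue formula for $c_{\alpha\beta\gamma}$ and fix integration constants: the flat-pencil identity $g^{\alpha\beta}/(\widetilde d_\alpha+\widetilde d_\beta)=\eta^{\alpha\gamma}\eta^{\beta\delta}\,\p^2\F_{k,l}/\p t^\gamma\p t^\delta$ already expresses the second derivatives of the potential directly through $g$, which is why the whole argument reduces to degree counting on the at-most-quadratic polynomials $g^{ab}(y)$. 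If you pursue the superpotential route, you should use the analogous residue formula for the intersection form $g^{\alpha\beta}$ itself (rather than for third derivatives), and you must supply an actual proof of the truncation property of $\lambda_{k,l}$ --- at which point you will in effect be re-deriving Lemma~\ref{prop: g stabilization}.
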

\noindent Proof is given in Section~\ref{section: proof 1}.

Note that the change of the variables of the two sides of the equality given in the theorem is different. Namely, the metric $\eta$ should be computed from the different potentials. These metrics do not just have different components, but also have different size.

\subsection{Commuting PDEs}
In \cite{BDbN} and \cite{B22} the authors proved the equalities similar to that of Theorem~\ref{theorem: stabilization} for the $A_l$,$B_l$, $D_l$ Coxeter groups and their ``open extensions''. It was also observed that the associativity equations of the corresponding algebra structures concludes the consistency of the infinite system of PDEs. We extend this result here for $B_{k,l}$ Dubrovin--Frobenius manifolds.

Let $f = f(\dots,t_{-1},t_0,t_1,t_2,\dots)$ be a formal function depending on the variables $t_\bullet$ with both positive and negative indices.
Denote $\p_\alpha := \frac{\p}{\p t_\alpha}$. Fix some $d \ge 1$ and consider the system of PDEs
\begin{equation}\label{eq: PDEs}
    \p_\alpha\p_\beta f = \frac{\p^2 \F_{k_{min},k_{min}+d}}{\p v^\alpha \p v^\beta} \mid_{v^\gamma = \eta^{\gamma\delta} \p_1 \p_\delta f},
    \tag{d-PDEs}
\end{equation}
where $k_{min}$ is minimal index, s.t. $k = k_{min}$ and $l = k_{min}+d$ satisfy Eq.~\eqref{eq: stabilization conditions} for a given pair $\alpha,\beta$.

The right hand side of this equation is a rational function of $\p_1\p_\bullet f$ with the rational coefficients. Theorem~\ref{theorem: stabilization} asserts, that the right hand side is well--defined.
The set $\lbrace \p_1\p_\delta f \rbrace_{\delta=-\infty}^\infty$  should be considered as the initial condition data, and the PDEs above express all the second order derivatives of $f$ via this initial condition data. This is common to consider $\p_1\p_\gamma f$ as the functions of $t_1$ for all $\gamma$. 

\begin{theorem}\label{theorem: consistent}~
    \begin{enumerate}
    \item[(a)] The system \eqref{eq: PDEs} with both positive $\alpha,\beta > 0$ coincides with the dispersionless BKP hierarchy written in Fay form \eqref{bkp-dl}.
    \item[(b)] The system \eqref{eq: PDEs} is consistent.
    \item[(c)] The function $f = \F_{k,k+d} \mid_{v^\gamma = t_\gamma}$ is a solution to \eqref{eq: PDEs} with $\alpha$ and $\beta$ satisfying Eq.~\eqref{eq: stabilization conditions} for the given $k$ and $l=k+d$.
    \item[(d)] The system \eqref{eq: PDEs} with both negative $\alpha,\beta < 0$ is equivalent to the dispersionless BKP hierarchy.
    \end{enumerate}
\end{theorem}
\noindent Proof is given in Section~\ref{section: proof 2}.

\subsection{Dispersionless BKP hierarchy}
BKP hierarchy was introduced in \cite{DJKM} via the Lax form. It can be rewritten in an equivalent form via the Fay--type identities on the tau--function $\tau  = \tau(\bt)$ (see \cite{T1,T2}). Assume the logarithm of the tau--function to have the $\hbar$--expansion ${\log \tau = \sum_{g \ge 0} F_g \hbar^{g-2}}$. Then the dispersionless limit of the BKP hierarchy is the following system of PDEs on $F_0$.
\begin{align}
    \left( 1 - \frac{\p_1 (2D^{\mathrm{B}}(z_1) + 2D^{\mathrm{B}}(z_2))F_0}{z_1+z_2} \right) e^{2D^{\mathrm{B}}(z_1)\cdot2D^{\mathrm{B}}(z_2) F_0}
     = 1 - \frac{\p_1 (2D^{\mathrm{B}}(z_1) - 2D^{\mathrm{B}}(z_2))F_0}{z_1-z_2},
\tag{BKP-dl}\label{bkp-dl}
\end{align}
where
\[
    D^{\mathrm{B}}(z) := \sum_{n \ge 0} \frac{z^{-2n-1}}{2n+1} \frac{\p}{\p t_{n+1}}.
\]
This equality should be understood in the ring of formal power series in $z_1$ and $z_2$.
\begin{remark}
Original BKP hierarchy is written in the odd--index coodinates $\widetilde t_\bullet$ only. This is essential due to the connection of BKP and KP hierarchies. However, this does not play any role in our text and we choose our coordinates so that $t_k := \widetilde t_{2k-1}.$
\end{remark}
%

\subsection{Examples}
Denote $f_{\alpha,\beta} := \p_\alpha\p_\beta f$. The flows of \eqref{eq: PDEs} with  $\beta=1$ are just $f_{1,\alpha} = f_{1,\alpha}$. The more complicated flows read.
\begin{equation}\label{eq: first positive flows}
 \begin{aligned}
    f_{2,2} &= \frac{4}{3} f_{1,1}^3 - 2 f_{1,2} f_{1,1}+f_{1,3},
    \\
      f_{2,3} &= 4 f_{1,2} f_{1,1}^2-2 f_{1,3} f_{1,1}-2 f_{1,2}^2+f_{1,4},
    \\
      f_{2,4} &= 4 f_{1,3} f_{1,1}^2+4 f_{1,2}^2 f_{1,1}-2 f_{1,4} f_{1,1}-4 f_{1,2} f_{1,3}+f_{1,5}.
\end{aligned}
\end{equation}
These are dispersionless BKP flows. We also have for $d=1,2$ the additional flows
\begin{align*}
    f_{0,2} &= 2 f_{1,0} f_{1,1},
    \\
    f_{0,3} &= 2 f_{1,0} ( 2f_{1,1}^2 + f_{1,2}),
    \\
    f_{0,4} &=  2 f_{1,0} ( 4f_{1,1}^3 + 4 f_{1,2} f_{1,1} + f_{1,3}),
\end{align*}
\begin{align*}
    f_{-1,2} &= \frac{16}{3}f_{1,0}^3 + 2 f_{1,-1} f_{1,1},
    \\
    f_{-1,3} &= 32 f_{1,1} f_{1,0}^3+4 f_{1,-1} f_{1,1}^2 + 2 f_{1,-1} f_{1,2}, \qquad f_{0,0} = \frac{f_{1,-1}}{2 f_{1,0}}.
\end{align*}
plus infinitely more flows on $f_{-1,\alpha}$ and $f_{0,\alpha}$ with $\alpha \ge 1$.

%

\subsection{Comments and conjectures}
In order to be more explicit, to prove (d) of Theorem~\ref{theorem: consistent} we show that the system \eqref{eq: PDEs} with both negative $\alpha,\beta < 0$ coincides with the Fay form of dispersionless BKP hierarchy after the certain polynomial change of the initial condition data $\lbrace \p_1\p_\delta f \rbrace_{\delta=-\infty}^0$.

We do not identify the flows with $\alpha \cdot \beta < 0$, however the theorem above makes it essential to conjecture that the full system Eq.~\eqref{eq: PDEs} for $B_{k,l}$ Dubrovin--Frobenius manifolds is equivalent to the 2--component dispersionless BKP hierarchy.

We also conjecture that the system Eq.~\eqref{eq: PDEs} is equivalent to the full 2--component BKP hieararchy after the certain $\hbar$--deformation of the differential operators $\p_k \to \p_k^\hbar = \p_k + O(\hbar)$ like it was done in \cite{NZ} for KP hieararchy. However, this requires to compute the higher genera potentials for $B_{k,l}$ Dubrovin--Frobenius manifold what is a complicated task.

\subsection{Connection to DR and DZ hierarchies}
Our approach in studying PDEs with the help of Dubrovin--Frobenius manifolds is different from DR and DZ by construction. In particular, we need an infinite family of compatible potentials, wherear DR and DZ hieararchies are associated to one Dubrovin--Frobenius manifold or one CohFT respectively. In particular Eq.\eqref{eq: PDEs} for A--type Dubrovin--Frobenius manifolds give a dispersionless KP hieararchy while DR and DZ give the reduction of KP hieararchy. Note that dispersionless limits of DR and DZ hieararchies are equivalent. However one can make the following comment.

DZ hieararchy is constructed via the certain quasi--Miura transformation from the so--called principal hieararchy. Given an $N$--dimensional Dubrovin--Frobenius manifold with the potential $\F = \F(v^1,\dots,v^N)$, this is the system of differential equations on the set of functions $(w^1,\dots,w^N)$. Each $w^\alpha$ is the function of $t^{\beta,p}$ with $1 \le \beta \le N$ and $p \in \ZZ_{\ge 0}$. The dependence of $w^\alpha$ on $t^{\beta,0}$ is given by
\[
    \frac{\p w^\alpha}{\p t^{\beta,0}} = \eta^{\alpha \gamma} \frac{\p^3 \F}{\p v^{\gamma} \p v^{\beta} \p v^{\delta}} \mid_{v^\mu = \omega^\mu} \frac{\p w^{\delta}}{\p t^{1,0}}.
\]
It's easy to see that taking $\omega^\alpha := \eta^{\alpha \gamma} \p_1\p_\gamma f$ transforms this equation to the $\frac{\p }{\p t^{1}}$--derivative of the first $N$ flows of Eq.~\eqref{eq: PDEs}. However, this is not enough to claim that dispersionless limits of DZ and DR hierarchies are reductions of 2--component BKP.

\subsection{Acknowledgements}
The work of Alexey Basalaev was supported by the Russian Science Foundation (grant No. 24-11-00366).

The author acknowledges no conflict of interest or data used for this research.

\section{$B_{k,l}$ Dubrovin--Frobenius manifold}

This section is devoted to the $B_{k,l}$ Dubrovin--Frobenius manifold structure. The first two subsection recall the definition and construction of this Dubrovin--Frobenius manifold following \cite{D2} and \cite{Z07} respectively. The other subsections are new, providing some finer information about $B_{k,l}$.

Working with the $B_{k,l}$ Dubrovin--Frobenius manifold we always keep the notation consistent with that of \cite{Z07}. Due to this the coordinates $t^\bullet$ in this section are totally different from the coordinates $t_\bullet$ used in Introduction and Eq.~\eqref{eq: PDEs} in particular.

\subsection{Dubrovin--Frobenius manifolds}
Assume $M$ to be an open full--dimensional subspace of $\CC^l$. We say that it's endowed with a structure of Dubrovin--Frobenius manifold if there is a regular function $\F = \F(t_1,\dots,t_l)$ on $M$, s.t. the following conditions hold (cf. \cite{D2}).

\begin{itemize}
	\item 
There is a distinguished variable $t_k$, for some $1 \le k \le L$, such that:
\[
    \frac{\p \F}{\p t_k} = \frac{1}{2} \sum_{\alpha,\beta = 1}^l \eta_{\alpha,\beta} t_\alpha t_\beta,
\]
and $\eta_{\alpha,\beta}$ are components of a non-degenerate bilinear form $\eta$ (which does not depend on $t_\bullet$). In what follows denote by $\eta^{\alpha,\beta}$ the components of $\eta^{-1}$.

\item The function $\F$ satisfies a large system of PDEs called the WDVV equations:
\[
\sum_{\mu,\nu = 1}^l \frac{\p^3 \F}{\p t_\alpha \p t_\beta \p t_\mu} \eta^{\mu,\nu} \frac{\p^3 \F}{\p t_\nu \p t_\gamma \p t_\sigma}
=
\sum_{\mu,\nu = 1}^l \frac{\p^3 \F}{\p t_\alpha \p t_\gamma \p t_\mu} \eta^{\mu,\nu} \frac{\p^3 \F}{\p t_\nu \p t_\beta \p t_\sigma},
\]
which should hold for every given $1\leq\alpha,\beta,\gamma,\sigma\leq l$.

\item There is a vector field $E$ called the \textit{Euler vector field}, s.t. modulo quadratic terms in $t_\bullet$ we have $E \cdot \F = (3-\delta) \F$ for some fixed complex number $\delta$. We will assume $E$ to have the following simple form
\[
    E = \sum_{i=1}^l d_i t_i \frac{\p}{\p t_i}
\]
for some fixed numbers $d_1,\dots,d_l$. 
\end{itemize}

Given such a data $(M,\F,E)$ one can endow every tangent space $T_pM$ with a structure of commutative associative product $\circ$ (depending on $\bt$) defined as follows:
\[
    \frac{\p}{\p t_\alpha} \circ \frac{\p}{\p t_\beta} = \sum_{\delta,\gamma=1}^l\frac{\p^3 \F}{\p t_\alpha \p t_\beta \p t_\delta} \eta^{\delta\gamma} \frac{\p}{\p t_\delta}.
\]
The unit of this product is the vector field $e = \frac{\p}{\p t_k}$.
It follows that $\eta(a \circ b,c) = \eta(a,b \circ c)$ for any vector fields $a,b,c$.

B.Dubrovin proposed several \textit{sources} of Dubrovin--Frobenius manifolds. In particular, we gave the construction of a Dubrovin--Frobenius manifold structure on the complexified orbit space of a simple Coxeter group (see Lecture 4 of \cite{D1}). For the $B$--type Coxeter group this construction can be generalized as follows.

\subsection{Construction of $B_{k,l}$ Dubrovin--Frobenius manifold}
Fix $k,l$, s.t. $l \ge 1$ and $1 \le k \le l$. Let $x^1,\dots,x^l$ be the orthonormal coodinates of $\CC^l$ equipped with the scalar product $( \cdot ,\cdot  )$.

Let $\sigma_i$ be the $i$--th elementary symmetric function and $y^i := \sigma_i((x^1)^2,\dots,(x^l)^2)$. Then $y^1,\dots,y^l$ are coordinates on the orbit space of $B_l$.

Denote by $g$ the matrix with the following components
\[
    g^{ij} := (dy^i,dy^j) = \sum_{p=1}^l \frac{\p y^i}{\p x^p}\frac{\p y^j}{\p x^p}.
\]
It defines the metric on the orbit space of $B_l$.
For $P_l := u^{2l} + \sum_{j=1}^l u^{2(l-j)}y^j$ we have (cf.~Proposition 2.2.2 of \cite{SYS})
\begin{align}
    \sum_{i,j=1} g^{ij} u^{2(l-i)}v^{2(l-j)} &= \frac{2}{u^2-v^2} (u P_l'(u)P_l(v) - vP_l(u)P_l'(v)),
    \label{eq: g via generating function}
\end{align}

The orbit space of $B_l$ carries another bilinear form $\eta$ with the components
\[
    \eta^{ij} := \frac{\p g^{ij}}{\p y^k},
\]
whose determinant is a constant multiple of $(y^l)^{l-k}$. In particular, $\eta$ is nondegenerate on $M := \lbrace (y^1,\dots,y^l) \in \CC^l \ | \ y^l \neq 0 \rbrace$.
It follows immediately that $\eta$ is block--diagonal with the $k \times k$ and $(l-k)\times (l-k)$ blocks.

\begin{convention}\label{convention on indices}
In what follows $g^{\bullet,\bullet}$ will stay for the components of the metric $g$ in $dt^\bullet$ basis if the indices are Greek and in the $dy^\bullet$ basis if the indices are Latin.
\end{convention}

The following theorem summarizes Sections 2 and 3 of \cite{Z07}.

\begin{theorem}[\cite{Z07}]\label{theorem: Bkl construction}
    $M$ carries the flat coordinates $t^1,\dots,t^l$, s.t. $\eta$ is constant in these coordinates with the components given by Eq.~\eqref{eq: eta}. The components of $g$ in $dt^\bullet$ basis can be integrated by the function $\F_{k,l}=\F_{k,l}(t^1,\dots,t^l)$:
    \[
        \frac{g^{\alpha\beta}}{\widetilde d_\alpha + \widetilde d_\beta} = \eta^{\alpha\gamma}\eta^{\beta\delta}\frac{\p^2 \F_{k,l}}{\p t^\gamma \p t^\delta},
    \]
    with
    \[
        \widetilde d_\alpha = \frac{2\alpha -1}{2k}, \
        \widetilde d_\gamma = \frac{2(l+1-\gamma) -1}{2(l-k)},
        \quad \alpha \le k, \ \gamma > k
    \]
    The function $\F_{k,l}$ depends polynomially on $t^1,\dots,t^{l-1}$ and is a Laurent polynomial in $t^l$. The variable $t^k$ is distinguished by
    \[
        \eta_{\alpha\beta} = \frac{\p^3 \F_{k,l}}{\p t^k \p t^\alpha \p t^\beta}.
    \]
    Moreover $\F_{k,l}$ is the solution to WDVV equation with $\eta$ above. The following quasihomogeneity condition holds
    \begin{equation}\label{eq: potential quasihomogeneity}
        E \cdot \F_{k,l} = \frac{2k+1}{k} \F_{k,l} \quad \text{for} \quad E := \sum_{\alpha=1}^l \left(\widetilde d_\alpha + \frac{1}{2k} \right) t^\alpha \frac{\p }{\p t^\alpha}.
    \end{equation}
    Summing up we have that $(M,\F_{k,l},E)$ is a Dubrovin--Frobenius manifold with the unit $e = \frac{\p}{\p t^k}$.
\end{theorem}

\begin{remark}
    For Theorem~\ref{theorem: stabilization} we use the same potential $\F_{k,l}$, but written in the other coordinates $v^\alpha$ given by $v^{\alpha} = t^{k+1-\alpha}$. In this coordinates $\F_{k,l} = \F_{k,l}(v^{k+1-l},\dots,v^{-1},v^0,v^1,\dots,v^k)$. In the language of Dubrovin--Frobenius manifolds we have that $t^k = v^1$ is the coordinate, corresponding to the unit vector field.
    
    We will come to the coordinates $v^\alpha$ only when proving Theorem~\ref{theorem: stabilization}. Even though these change of coordinates is just an index shift, it is crucial for the statement of the theorem.
\end{remark}

The construction of B.Dubrovin (cf. \cite{D1,D2}) for $B_l$ Coxeter group is obtained by taking $k = l$. In this case the determinant of $\eta$ is a non--zero constant and $M$ can be taken to be $\CC^l$.

It follows immediately from the block--diagonal form of $\eta$ and also from the quasihomogeneity of the whole construction that the flat coordinates $t^\bullet$ above, are expressed via the coordinates $y^\bullet$ so that
\begin{equation}\label{eq: flat coordinate via y}
\begin{aligned}
    t^\alpha = t^\alpha(y^1,\dots,y^\alpha) &\quad \text{ if } \quad \alpha \le k,
    \\
    t^\beta = t^\gamma(y^\gamma,\dots,y^l) &\quad \text{ if } \quad \gamma \ge k+1.
\end{aligned}
\end{equation}

Moreover the coordinates $t^1,\dots,t^k$ are expressed via $y^1,\dots,y^k$ by the same formulae as in Dubrovin's $B_k$ construction\footnote{note the group rank to be $k$ and not $l$}.

\subsection{Examples}
Let the subscript $(l)$ indicate the rank of the group to which $g$ corresponds.
In the $dy$ basis we have
\[
    g_{(4)}
    =
    \left(
\begin{array}{cccc}
 4 y^1 & 8 y^2 & 12 y^3 & 16 y^4 \\
 8 y^2 & 4 y^1 y^2+12 y^3 & 8 y^1 y^3+16 y^4 & 12 y^1 y^4 \\
 12 y^3 & 8 y^1 y^3+16 y^4 & 4 y^2 y^3+12 y^1 y^4 & 8 y^2 y^4 \\
 16 y^4 & 12 y^1 y^4 & 8 y^2 y^4 & 4 y^3 y^4 \\
\end{array}
\right),
\]
The corresponding flat coordinates are given by
\begin{align*}
 & y^1 = t^1, \ y^2 = t^2 + \frac{(t^1)^2}{4}, \ y^3 = t^3 + \frac{(t^1)^3}{108} + \frac{t^1 t^2}{6}, \quad y^4 = (t^4)^2 \qquad && k=3,
 \\
 & y^1 = t^1, \ y^2 = t^2 + \frac{(t^1)^2}{8}, \quad y^3 = t^3 t^4, \ y^4 = (t^4)^4, \qquad && k=2,
 \\
 & y^1 = t^1, \quad y^2 = t^2 t^4 + \frac{1}{12} (t^3)^2,  \ y^3 = t^3 (t^4)^3, \ y^4 = (t^4)^6, \qquad && k=1.
\end{align*}

\[
    g_{(5)} = \left(
\begin{array}{ccccc}
 4 y^1 & 8 y^2 & 12 y^3 & 16 y^4 & 20 y^5 \\
 8 y^2 & 4 y^1 y^2+12 y^3 & 8 y^1 y^3+16 y^4 & 12 y^1 y^4+20 y^5 & 16 y^1 y^5 \\
 12 y^3 & 8 y^1 y^3+16 y^4 & 4 y^2 y^3+12 y^1 y^4+20 y^5 & 8 y^2 y^4+16 y^1 y^5 & 12 y^2 y^5 \\
 16 y^4 & 12 y^1 y^4+20 y^5 & 8 y^2 y^4+16 y^1 y^5 & 4 y^3 y^4+12 y^2 y^5 & 8 y^3 y^5 \\
 20 y^5 & 16 y^1 y^5 & 12 y^2 y^5 & 8 y^3 y^5 & 4 y^4 y^5 \\
\end{array}
\right).
\]

The corresponding flat coordinates are given by
\begin{align*}
    &y^1 = t^1, \ y^2 = t^2 + \frac{5 (t^1)^2}{16}, \ y^3 = t^3 + \frac{(t^1)^3}{32} + \frac{3 t^1 t^2}{8},
    \\
    & \qquad\qquad y^4 = t^4 + \frac{(t^1)^4}{2048} + \frac{(t^1)^2t^2}{64}  +\frac{ t^1t^3}{8}+\frac{(t^2)^2}{16}, \quad y^5 = (t^5)^2, \qquad && k=4,
    \\
    &y^1= t^1, \ y^2= t^2 + \frac{(t^1)^2}{4}, \ y^3= t^3 + \frac{(t^1)^3}{108} + \frac{t^2 t^1}{6}, \qquad y^4 = t^4 t^5, \ y^5 = (t^5)^4 \qquad && k=3,
    \\
    &y^1= t^1, \ y^2= t^2 + \frac{(t^1)^2}{8}, \qquad y^3= t^3 t^5 + \frac{(t^4)^2}{12}, \ y^4 = t^4 (t^5)^3, \ y^5 = (t^5)^6, \qquad && k=2,
    \\
    &y^1 = t^1 \qquad y^2= t^2t^5 + \frac{t^3 t^4}{8}, \ y^3 = t^3 (t^5)^3 + \frac{3}{16} (t^4)^2 (t^5)^2, \ y^4= t^4 (t^5)^5, \ y^5= (t^5)^8, \qquad && k=1.
\end{align*}

The examples of the potentials $\F_{k,l}$ are\footnote{we follow Dubrovin's convention using only lower indices in the examples}.

\begin{align*}
 & \F_{2,5} = \frac{t_5^6}{10}+\frac{t_1 t_4 t_5^3}{6} +\frac{t_1^2 t_3 t_5}{16} +\frac{t_2 t_3 t_5}{2} +\frac{t_1^5}{7680}+\frac{t_1 t_2^2}{16} +\frac{t_1^2 t_4^2}{192} +\frac{t_2 t_4^2}{24} +\frac{t_3^2 t_4}{24 t_5}-\frac{t_3 t_4^3}{216 t_5^2}+\frac{t_4^5}{4320 t_5^3},
\\
& \F_{3,6} = \frac{t_1^7}{3265920}+\frac{t_2^2 t_1^3}{2592}+\frac{t_5^2 t_1^3}{2592}+\frac{t_4 t_6 t_1^3}{216} +\frac{t_5 t_6^3 t_1^2}{24} + \frac{t_6^6 t_1}{10} -\frac{t_2^3 t_1}{432} +\frac{t_3^2 t_1}{24}
\\
&\quad +\frac{t_2 t_5^2 t_1}{144} +\frac{t_2 t_4 t_6 t_1}{12} +\frac{t_2 t_5 t_6^3}{6} +\frac{t_3 t_5^2}{24} +\frac{t_2^2 t_3}{24} +\frac{t_3 t_4 t_6}{2} +\frac{t_4^2 t_5}{24 t_6}-\frac{t_4 t_5^3}{216 t_6^2}+\frac{t_5^5}{4320 t_6^3}.
\end{align*}

\subsection{Some zeros}
The following proposition gives some insight on the structure of $B_{k,l}$ Dubrovin--Frobenius manifold.

\begin{proposition}\label{prop: 3rd derivatives zeros}
Let $k \ge 2$.
For any $\gamma,\delta,\alpha,\beta$, s.t. $k+1 \le \gamma,\delta \le l$ and $1 \le \alpha,\beta \le k$ we have
\begin{enumerate}
 \item[(a)] if $\gamma+\delta \le k+l$, then
 \[
    \frac{\p^3 \F_{k,l}}{\p t^\alpha \p t^\gamma \p t^\delta} = 0,
 \]
 \item[(b)] if $\alpha+\beta \ge k+1$, then
 \[
 \frac{\p^3 \F_{k,l}}{\p t^\gamma \p t^\alpha \p t^\beta} = 0.
 \]
\end{enumerate}
\end{proposition}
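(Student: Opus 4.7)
My plan is to reduce both parts to the statement that certain components of the contravariant metric $g^{\mu\nu}$ depend, as functions on $M$, only on flat coordinates from a single block, and then to read this property off from an explicit polynomial expression for $g$ in the $y$--basis, extracted from the generating function~\eqref{eq: g via generating function}.

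First I would invert the flat--pencil identity of the previous theorem by contracting both sides of $g^{\mu\nu}/(\widetilde d_\mu+\widetilde d_\nu)=\eta^{\mu\rho}\eta^{\nu\sigma}\partial_\rho\partial_\sigma \F_{k,l}$ with $\eta_{\alpha\mu}\eta_{\beta\nu}$, obtaining
\[
\partial_\alpha\partial_\beta \F_{k,l} \;=\; \sum_{\mu,\nu}\eta_{\alpha\mu}\eta_{\beta\nu}\,g^{\mu\nu}/(\widetilde d_\mu+\widetilde d_\nu).
\]
Since $\eta$ is block--diagonal in the flat basis and antidiagonal within each block (with $\mu+\nu=k+1$ on the first $k\times k$ block and $\mu+\nu=k+l+1$ on the second $(l-k)\times(l-k)$ block), the sum collapses to a single term. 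Differentiating once more reduces (a) to $\partial_\alpha g^{\,k+l+1-\gamma,\,k+l+1-\delta}=0$ for $\alpha\le k$, and (b) to $\partial_\gamma g^{\,k+1-\alpha,\,k+1-\beta}=0$ for $\gamma>k$.

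Next I would extract an explicit formula for $g^{ij}$ in the $y$--basis. With $Q(a):=P_l(\sqrt a)=\sum_{j=0}^l y^j a^{l-j}$ and $y^0:=1$, the right--hand side of~\eqref{eq: g via generating function} rewrites as $4\bigl(aQ'(a)Q(b)-bQ(a)Q'(b)\bigr)/(a-b)$; routine manipulation of the antisymmetric numerator yields
\[
g^{\,l-m,\,l-n}(y)\;=\;4\sum_{j=\max(0,\,m+n+1-l)}^{\min(m,n)}(m+n+1-2j)\,y^{\,l+j-m-n-1}\,y^{\,l-j}.
\]
In case (a), setting $(\gamma^\ast,\delta^\ast):=(k+l+1-\gamma,\,k+l+1-\delta)$ and $m:=l-\gamma^\ast$, $n:=l-\delta^\ast$, the hypothesis translates to $m+n\le l-k-2$ and $m,n\le l-k-1$, which forces both $l+j-m-n-1$ and $l-j$ to lie in $[k+1,l]$ throughout the summation range, so $g^{\gamma^\ast,\delta^\ast}\in\QQ[y^{k+1},\dots,y^l]$. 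In case (b), with $(\alpha^\ast,\beta^\ast):=(k+1-\alpha,\,k+1-\beta)$ and $m:=l-\alpha^\ast$, $n:=l-\beta^\ast$, the hypothesis gives $m+n\ge 2l-k-1$ and $m,n\le l-1$, forcing both $y$--indices into $\{0,1,\dots,k\}$, so $g^{\alpha^\ast,\beta^\ast}\in\QQ[y^1,\dots,y^k]$.

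To close the argument, I invoke~\eqref{eq: flat coordinate via y}, which makes the change of variables block--triangular: $y^i$ with $i\le k$ is a function of $t^1,\dots,t^k$ only, while $y^i$ with $i>k$ is a function of $t^{k+1},\dots,t^l$ only. Combined with the previous step this immediately yields $\partial_\alpha g^{\gamma^\ast,\delta^\ast}=0$ for $\alpha\le k$ and $\partial_\gamma g^{\alpha^\ast,\beta^\ast}=0$ for $\gamma>k$. The principal point where care is needed is the index--range bookkeeping in the generating--function expansion; once those inequalities are verified the block--triangular structure of the coordinate change finishes the argument mechanically.
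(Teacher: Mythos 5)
Your reduction and the paper's agree: both use the block--antidiagonal form of $\eta$ and the flat--pencil identity to convert the vanishing of $\p^3\F_{k,l}/\p t^\alpha\p t^\gamma\p t^\delta$ into the vanishing of $\p_\alpha g^{\gamma^\ast\delta^\ast}$ (resp.\ $\p_\gamma g^{\alpha^\ast\beta^\ast}$), and both then push the computation into the $y$--basis using the block structure of Eq.~\eqref{eq: flat coordinate via y}. Where you diverge is the key mechanism: the paper avoids any explicit formula for $g$ and argues by quasihomogeneity --- $g^{ab}$ is at most quadratic with $\deg g^{ab}=a+b-1$, so a nonzero $\p g^{ab}/\p y^c$ would be a constant multiple of some $y^p$ with $p=a+b-1-c\ge k+l+1-c>l$, which is impossible --- whereas you derive the closed formula $g^{l-m,l-n}=4\sum_j(m+n+1-2j)\,y^{l+j-m-n-1}y^{l-j}$ from the generating function and check directly that under the stated hypotheses both $y$--indices land in a single block. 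I verified your formula against $g_{(4)}$ and the antisymmetrization of $\sum(m-n)y^{l-m}y^{l-n}a^mb^n$ divided by $a-b$; it is correct, and your index bounds do force $l-j,\;l+j-m-n-1\in[k+1,l]$ in case (a) and in $\{0,\dots,k\}$ in case (b) (with $y^0=1$ harmless). Your version buys an exact description of which $y$'s occur in each $g^{ab}$, at the cost of more bookkeeping; the paper's degree count is shorter but proves only the vanishing. One point to tighten: after the reduction, the object you must control is the \emph{flat--basis} component $g^{\gamma^\ast\delta^\ast}$, which by the chain rule is $\sum_{a,b}\frac{\p t^{\gamma^\ast}}{\p y^a}\frac{\p t^{\delta^\ast}}{\p y^b}g^{ab}$ with $a\ge\gamma^\ast$, $b\ge\delta^\ast$ (resp.\ $a\le\alpha^\ast$, $b\le\beta^\ast$), not just the single $y$--basis component with the same numerical indices. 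This is not a real gap --- the Jacobian factors live in the correct block by Eq.~\eqref{eq: flat coordinate via y}, and increasing $a,b$ only decreases $m+n$, so every term in the sum satisfies your inequality $m+n\le l-k-2$ (dually, $m+n\ge 2l-k-1$ in case (b)) --- but you should say so explicitly, as the paper does when it writes out the triple chain--rule sum.
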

\begin{proof}
    For case (a) we have to prove the vanishing of
    \[
        \frac{\p}{\p t^\alpha} g^{k+l+1-\gamma,k+l+1-\delta} = \sum \frac{\p y^c}{\p t^\alpha} \frac{\p}{\p y^c} \left( \frac{\p t^{k+l+1-\gamma}}{\p y^a} \frac{\p t^{k+l+1-\delta}}{\p y^b} g^{ab} \right),
    \]
    where the summation is taken over $1 \le c \le \alpha$, $l \ge  a \ge k+l+1-\gamma$, $l \ge b \ge k+l+1-\delta$. By using Eq.~\eqref{eq: flat coordinate via y} we get
    \[
        \frac{\p}{\p t^\alpha} g^{k+l+1-\gamma,k+l+1-\delta} = \sum \frac{\p y^c}{\p t^\alpha} \frac{\p t^{k+l+1-\gamma}}{\p y^a} \frac{\p t^{k+l+1-\delta}}{\p y^b} \frac{\p g^{ab}}{\p y^c} .
    \]
    To prove the proposition we show that $\p g^{ab}/\p y^c = 0$ under the conditions on the indices $a,b,c$ above.

    It follows from Eq.~\eqref{eq: g via generating function} that $g^{ab}$ is at most quadratic in $y^\bullet$. Moreover if one assigns the grading to $y^p$ by $\deg y^p = p$ then $\deg g^{ab} = a+b-1$. We have
    \[
        \deg \frac{\p g^{ab}}{\p y^c} \ge 2(k+l+1) - \gamma - \delta -1 - c \ge k+l+1-c.
    \]
    If $g^{ab}$ is quadratic and the partial derivative above is non--zero, than its degree should be the degree of some $y^\bullet$. With the bounds on $c$ above this can not be ever reached.

    The proof of claim (b) is completely similar and therefore skipped.
\end{proof}
\begin{remark}
    One notes immediately by looking at $\F_{2,5}$ and $\F_{3,6}$ above that the conditions on $\alpha+\beta$ in (a) and $\gamma+\delta$ in (b) are strict.
\end{remark}

\begin{corollary}\label{corollary: Bk in Bkl}
    For $\alpha+\beta \le k+1$ we have
    \[
        \frac{\p^2 \F_{k,l}}{\p v^\alpha \p v^\beta} = \frac{\p^2 \F_{k,k}}{\p v^\alpha \p v^\beta}.
    \]
\end{corollary}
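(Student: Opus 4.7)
The plan is to identify both sides via Zuo's pencil description of $\F_{k,l}$. Matching Eq.~\eqref{eq: eta} with the $t$-description of Section~2 gives the identification $v^i\leftrightarrow t^{k+1-i}$, under which the corollary reads
\[
\frac{\p^2\F_{k,l}}{\p t^{\alpha'}\p t^{\beta'}}=\frac{\p^2\F_{k,k}}{\p t^{\alpha'}\p t^{\beta'}}\qquad\text{for }\alpha',\beta'\in[1,k]\text{ with }\alpha'+\beta'\ge k+1.
\]
Zuo's formula $g^{\mu\nu}/(\widetilde d_\mu+\widetilde d_\nu)=\eta^{\mu\gamma}\eta^{\nu\delta}\p^2\F/\p t^\gamma\p t^\delta$, together with the block-diagonality of $\eta$ and the antidiagonal form of its positive $k\times k$ block (which gives $\eta^{\mu,k+1-\mu}=4k$), yields
\[
\frac{\p^2\F}{\p t^{\alpha'}\p t^{\beta'}}=\frac{g^{k+1-\alpha',\,k+1-\beta'}}{16k^2\,(\widetilde d_{k+1-\alpha'}+\widetilde d_{k+1-\beta'})}
\]
with an $l$-independent constant. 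It therefore suffices to show that, for $\mu,\nu\in[1,k]$ with $\mu+\nu\le k+1$, the flat-basis component $g^{\mu\nu}$ coincides in $B_{k,l}$ and $B_{k,k}$. In parallel, Proposition~\ref{prop: 3rd derivatives zeros}(b) guarantees that the $\p^2\F_{k,l}$ on the left-hand side is actually a function of $t^1,\dots,t^k$ only, so the comparison is between polynomials in those variables.

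Next I would expand $g^{\mu\nu}=\sum_{a,b}(\p t^\mu/\p y^a)(\p t^\nu/\p y^b)\,g^{ab}$ via the change of basis. By Eq.~\eqref{eq: flat coordinate via y} only $a\le\mu$, $b\le\nu$ contribute, so $a+b\le k+1$; and by the discussion immediately after that equation the triangular transition $t^\mu(y^1,\dots,y^\mu)$ for $\mu\le k$ is given by the same formulae in $B_{k,l}$ and in $B_{k,k}$. The agreement of $g^{\mu\nu}$ thus reduces to the agreement of the $dy$-basis components $g^{ab}$ for $a+b\le k+1$.

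The main computational step is then to show that these $g^{ab}$ are $l$-independent for $a+b\le k+1$. I would derive this by unwinding the generating function~\eqref{eq: g via generating function}: writing $P_l(u)=\sum_{j=0}^{l} y^j u^{2(l-j)}$ with $y^0:=1$ and extracting the coefficient of $u^{2(l-a)}v^{2(l-b)}$ leads to a closed expression for $g^{ab}$ supported on $y^1,\dots,y^{a+b-1}$ and valid whenever $a+b-1\le l$; since $a+b-1\le k\le l$ throughout, the same formula governs $B_l$ and $B_k$. This generating-function expansion is the only nontrivial calculation in the argument; the rest amounts to bookkeeping with the block structure of $\eta$ and the triangular dependence $t^\mu\leftrightarrow y^\mu$ for $\mu\le k$.
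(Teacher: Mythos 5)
Your argument is correct, and its skeleton matches the paper's: both proofs pass from the statement about $\F$ to a statement about the flat-basis components of the intersection form $g$ via Zuo's formula $g^{\mu\nu}/(\widetilde d_\mu+\widetilde d_\nu)=\eta^{\mu\gamma}\eta^{\nu\delta}\,\p^2\F/\p t^\gamma\p t^\delta$ (noting that the relevant $\eta$-entries and charges $\widetilde d$ are $l$-independent for indices $\le k$), and both exploit the triangular dependence $t^\mu=t^\mu(y^1,\dots,y^\mu)$, which the paper states is given by the same formulae as in Dubrovin's $B_k$ construction. Where you diverge is in the final step. The paper's two-line proof asserts that $g^{\alpha\beta}_{(k,l)}$ and $g^{\alpha\beta}_{(k,k)}$ agree \emph{up to} a function of $t^{k+1},\dots,t^l$ and then invokes Proposition~\ref{prop: 3rd derivatives zeros}(b) to kill that residual dependence; you instead prove the sharper statement that the curvilinear components $g^{ab}$ themselves coincide as polynomials in $y^1,\dots,y^{a+b-1}$ whenever $a+b-1\le\min(k,l)$, extracted directly from the generating function~\eqref{eq: g via generating function}, and use Proposition~\ref{prop: 3rd derivatives zeros}(b) only as a consistency check. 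Your route is more self-contained (it fills in the paper's ``follows immediately'' and does not logically depend on the Proposition), at the cost of redoing the degree count on $g^{ab}$ that the paper had already packaged into Proposition~\ref{prop: 3rd derivatives zeros}; the paper's route is shorter but leaves the first assertion unjustified in print. One small caveat: your identification $v^i\leftrightarrow t^{k+1-i}$ should strictly be ``up to an $l$-independent rescaling'' (compare the change of variables $t^\epsilon=4k\,\widetilde t^\epsilon$ used in Section~\ref{section: NN proof}), but since the same rescaling applies to $\F_{k,l}$ and $\F_{k,k}$ this does not affect the conclusion.
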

\begin{proof}
    This follows immediately that for $1 \le \alpha,\beta \le k$ in the flat basis the components $g^{\alpha,\beta}$ for $B_{k,l}$ coincide with the components $g^{\alpha,\beta}$ for $B_k$ up to a function depending on $t^\gamma$ with $\gamma \ge k+1$. The proposition above shows that this dependence is trivial if $\alpha+\beta \le k+1$. The statement follows now from the relation between $g$ and $\F_{k,l}$.
\end{proof}

\subsection{More on the metric $g$}
In what follows denote $\widehat y^a := y^ly^{l-a}$ for any $a =1, \dots, l-1$ and $\widehat y^l = y^l$.

\begin{proposition}\label{prop: g neg-pos}
    In the basis $dy^1,\dots,dy^l$ holds
    \[
        g^{a,b}(y) = (y^l)^{-2} g^{l+1-a,l+1-b}(\widehat y),
    \]
    for any $2 \le a,b \le l-1$.
\end{proposition}
\begin{proof}
    We use Eq.~\eqref{eq: g via generating function}. Direct computations show that
    \begin{align*}
        & y^l P_l(u,y) = u^{2l} P_l(u^{-1},\widehat y) - 1 + (y^l)^2,
        \\
        & y^l P'_l(u,y) = 2l \cdot u^{2l-1} P(u^{-1},\widehat y) - u^{2l-2} P'(u^{-1},\widehat y),
    \end{align*}
    where $P_l'(u^{-1},\widehat y)$ stands for the change of the variables $u \to u^{-1}$ applied to the polynomial $\frac{\p P_l}{\p u}$. Let $W_l = W_l(u,v)$ stand for the left hand side generating function of Eq.~\eqref{eq: g via generating function}. We have the equality of power series in $u,v$
    \begin{align*}
        &\left(W(u,v,y)- (y^l)^{-2} u^{2l-2}v^{2l-2} W(u^{-1},v^{-1},\widehat y)\right) (u^2-v^2)
        \\
        &\quad = ((y^l)^{-2}-1)\left(2l u^{2l}P_l(u^{-1},\widehat y) - 2l v^{2l}P_l(v^{-1},\widehat y) - u^{2l-1}P'_l(u^{-1},\widehat y)+v^{2l-1}P'_l(v^{-1},\widehat y) \right).
    \end{align*}
    The right hand side power series only involves the terms like $u^{2p}$ and $v^{2q}$, hence the left hand side power does not have any mixed monomials $u^{2p}v^{2q}$. Comparing the respective coefficients we get the desired equality.

\end{proof}

\subsection{More on flat coordinates}

It was observed in \cite[Section~5]{Z07} that the flat coordinates $t^\alpha$ of $B_{k,l}$ can be found as follows. Consider an auxiliary function $\lambda: \CC \times M \to \CC$
\[
    \lambda := p^{2k}+\sum_{a=1}^l p^{2(k-a)} y^a.
\]
For every $y \in M$, assumed as a function in $p$, the function $\lambda$ has order $2k$ and $2(l-k)$ poles at $\infty$ and $0$ respectively. Then the flat coordinates of the metric $\eta$ read\footnote{we correct here a missprint Eq.(5.7) in \cite{Z07} where the variables are listed in a wrong order}
\begin{align}
    &t^\alpha = \frac{2k}{2\alpha -1} \res_{p=\infty} \left( \lambda^{\frac{2\alpha-1}{2k}} dp\right), \quad 1 \le \alpha \le k, \label{eq: flat positive}
    \\
    &t^{l+1-\beta} = \frac{2(l-k)}{2\beta -1} \res_{p=0} \left( \lambda^{\frac{2\beta-1}{2(l-k)}} dp\right), \quad 1 \le \beta \le l-k, \label{eq: flat negative}    
    \\
    &t^{l} = \res_{p=0} \left( \lambda^{\frac{1}{2(l-k)}} dp\right) = \left(y^l\right)^{1/2(l-k)}, \label{eq: y_l flat}
\end{align}
where $\lambda^{\frac{1}{2k}}$ and $\lambda^{\frac{1}{2(l-k)}}$ should be assumed as the formal power series in $p$ with coefficients, depending in $y$. 

\begin{proposition}\label{prop: psi neg-pos}
    Let $l=2k$ and $\alpha \le k$. Consider $t^\alpha$ as the functions in $y^1,\dots,y^l$. Then we have the following equality
    \[
        \frac{\p t^\alpha}{\p y^b}(y) = -r_\alpha (t^l)^{2(k-\alpha)+1} \frac{\p t^{l+1-\alpha}}{\p y^{l+1-b}}(\widehat y),
    \]
    where $r_1 = l$ and $r_\alpha = 1$ if $\alpha > 1$.
\end{proposition}
\begin{proof}
    If $\alpha = 1$ the proposition follows by Eq.~\eqref{eq: y_l flat}. Assume $\alpha > 1$.
    Note that 
    \[
        (y^l)^{-1} \lambda(p^{-1},\widehat y) = p^{2(l-k)} + \sum_{b=1}^{l-1} y^b p^{2(l-k-b)} + p^{-2k}(y^l)^{-1}.
    \]
    For $l=2k$ in a neighborhood of $p=\infty$ by Eq.\eqref{eq: flat positive} we have
    \begin{align*}
        \frac{\p t^\alpha}{\p y^b} &= \frac{1}{2k} \res_{p =\infty} \left(p^{2(k-b)} \left( \lambda(p,y) \right)^{\frac{2(\alpha-k)-1}{2k}} dp \right)
        \\
        &\quad = \frac{1}{2k} \res_{p =\infty} \left(p^{2(k-b)} \left( (y^l)^{-1} \lambda(p^{-1},\widehat y) \right)^{\frac{2(\alpha-k)-1}{2k}} dp \right)
        \\
        &\quad\quad = - \frac{1}{2k} (y^l)^{\frac{2(k-\alpha)+1}{2k}} \res_{p =0} \left(p^{2(b-k)-2} \left( \lambda(p,\widehat y) \right)^{\frac{2(\alpha-k)-1}{2k}} dp \right),
    \end{align*}
    what concludes the proof by using Eq.~\eqref{eq: y_l flat} and \eqref{eq: flat negative}.
\end{proof}

Let $\widehat t = \widehat t(t)$ be the change of the variables induced by the change of the variables $\widehat y(y)$.

\begin{proposition}\label{proposition: positive to negative}
    For $l=2k$ and any $1 \le \alpha,\beta \le k$ we have
    \[
        \frac{\p^2 \F_{k,2k}}{\p t^{k+1-\alpha} \p t^{k+1-\beta}}(t)
        =
        \left( t^l \right)^{-2(\alpha+\beta-1)} \frac{\p^2 \F_{k,2k}}{\p t^{k+\alpha} \p t^{k+\beta}}(\widehat t),
    \]
    where $r_\alpha = 4k$ if $\alpha = 1$ and $r_\alpha = 1$ otherwise.
\end{proposition}
\begin{proof}
    We make use of Theorem~\ref{theorem: Bkl construction}. First of all note that for $l=2k$ we have $\widetilde d_\gamma = \widetilde d_{l+1-\gamma}$ for any index $\gamma$. Then we have
    \[
     \frac{\p^2 \F_{k,2k}}{\p t^{k+1-\alpha}\p t^{k+1-\beta}} = \frac{1 }{(4k)^2} \frac{g^{\alpha,\beta}}{\widetilde d_{\alpha}+\widetilde d_{\beta}}
     = 
     \frac{(4k)^{-2}}{\widetilde d_{\alpha}+\widetilde d_{\beta}} \sum_{a,b} \frac{\p t^{\alpha}}{\p y^a}\frac{\p t^{\beta}}{\p y^b} g^{a,b}
    \]
    Applying Propositions~\ref{prop: g neg-pos} and \ref{prop: psi neg-pos} expression above reads
    \[
     \frac{(4k)^{-2} r_{\alpha} r_{\beta}}{\widetilde d_{l+1-\alpha}+\widetilde d_{l+1-\beta}} \left( t^l \right)^{-2(\alpha+\beta-1)} \left( \sum_{a,b} \frac{\p t^{l+1-\alpha}}{\p y^{l+1-a}} \frac{\p t^{l+1-\beta}}{\p y^{l+1-b}} g^{l+1-a,l+1-b} \right)(\widehat t).
    \]
    Note that $\eta_{\alpha,\gamma} = (4k)^{-1} \delta_{\alpha+\gamma,k+1} = (4k)^{-1} \delta_{2k+\alpha+\gamma,l+k+1} = r_\alpha r_\gamma \eta_{l+1-\alpha,l+1-\gamma}$.
    This concludes the proof.
\end{proof}

\begin{proposition}\label{proposition: positive rescaling}
    Let $l=2k$ and $\alpha+\beta \le k+1$. Consider the change of the variables $t^\gamma = (t^l)^{2\gamma} \overline t^\gamma$ for $\gamma < l$ and $t^l = \overline t^l$. Then
    \[
        \left( (t^l)^{-2(\alpha+\beta-1)}\frac{\p^2 \F_{k,2k}}{\p t^{k+1-\alpha}\p t^{k+1-\beta}} \right)(t(\overline t)) 
        = 
        \frac{\p^2 \F_{k,2k} (\overline t)}{\p \overline t^{k+1-\alpha}\p \overline t^{k+1-\beta}}.
    \]
    Namely, rescaling the variables $t^1,\dots,t^{l-1}$ as above in the second order derivatives $\frac{\p^2 \F_{k,2k}}{\p t^{k+1-\alpha}\p t^{k+1-\beta}}$ results just in the multiple $(t^l)^{2(\alpha+\beta-1)}$.
\end{proposition}
\begin{proof}
    It following from Proposition~\ref{prop: 3rd derivatives zeros} that the second order derivative in question depends on $t^1,\dots,t^k$ only. According to the quasihomogeneity condition (see Eq.~\eqref{eq: potential quasihomogeneity}) we have
    \[
        E \cdot \frac{\p^2 \F_{k,2k}}{\p t^{k+1-\alpha}\p t^{k+1-\beta}} = \frac{2k+1 - 2(k+1) + \alpha+\beta}{k} \frac{\p^2 \F_{k,2k}}{\p t^{k+1-\alpha}\p t^{k+1-\beta}}.
    \]
    Then the monomial $\prod_{i \in I} t^{\gamma_i}$ is a summand of this second order derivative only if $\sum_{i \in I} \gamma_i = \alpha+\beta -1$. This concludes the proof.
\end{proof}

\section{Proofs of the theorems}

\subsection{Proof of Theorem~\ref{theorem: stabilization}}\label{section: proof 1}
Let $\alpha,\beta \in \ZZ$. Divide the proof into three parts. 1st: $\alpha,\beta \ge 1$, 2nd: $\alpha,\beta \le 0$ and 3rd: $\alpha \le 0$ and $\beta \ge 0$. We call these parts PP, NN and PN sectors respectively, keeping P for positive and N for negative.

In what follows we rewrite the stabilization statement of the theorem in the coordinates $t^\bullet$. In particular, we will have no negative indices.

\subsubsection{Sector PP}
This case follows from  Corollary~\ref{corollary: Bk in Bkl} above and Proposition~4.3 of \cite{BDbN}.

\subsubsection{Sector NN}\label{section: NN proof}
Let $1 \le \gamma,\delta \le l-k$ s.t. $\gamma+\delta \le l-k$.
Stabilization in NN sector is equivalent to
\begin{equation*}
    \frac{\p^2 \F_{k,l}}{\p t^{k+\gamma} \p t^{k+\delta}} (\widetilde t)
    =
    \frac{\p^2 \F_{k+r,l+r}}{\p t^{k+r+\gamma} \p t^{k+r+\delta}} (\widetilde t)
\end{equation*}
where $\widetilde t = \widetilde t(t)$ is the change of the variables given by $t^\epsilon = 4k \cdot \widetilde t^\epsilon$ if $1 \le \epsilon \le k$ and $t^l = 2 \cdot \widetilde t^{k+1-l}$, $t^{k+1} = 2 \cdot \widetilde t^0$,  $t^\nu = 4(l-k) \cdot \widetilde t^{k+1-\nu}$ for $\nu \ge k+1$.

The equation above is equivalent to
\[
    \frac{g_{(l)}^{l+1-\gamma,l+1-\delta}(\widetilde t)}{\widetilde d_{l+1-\gamma}+\widetilde d_{l+1-\delta}}
    =
    \frac{g_{(l+r)}^{l+r+1-\gamma,l+r+1-\delta}(\widetilde t)}{\widetilde d_{l+r+1-\gamma}+\widetilde d_{l+r+1-\delta}}.
\]
One notes immediately that the denominators on the both sides coincide. The numerator on the left hand side is expressed by
\begin{equation}\label{eq: final expression in NN}
     g_{(l)}^{l+1-\gamma,l+1-\delta} = \sum_{i=1}^\gamma\sum_{j=1}^\delta \frac{\p t^{l+1-\gamma}}{\p y^{l+1-i}} \frac{\p t^{l+1-\delta}}{\p y^{l+1-j}} g^{l+1-i,l+1-j}_{(l)},
\end{equation}
where we follow Convention~\ref{convention on indices}.

The following lemma is the key to the stabilization in the NN sector.
Let $\widetilde y$ be the new coordinates s.t. $\widetilde y^i = y^{k+1-i}$. 
It follows immediately from Eq.~\eqref{eq: flat negative} that in the flat coordinates this is equiavalent to $\widetilde t^i = t^{k+1-i}$.

\begin{lemma}\label{prop: g stabilization}
    For any $r \ge 1$ and all $i,j$, s.t. $i+j \le l$
    \[
        g_{(l)}^{l+1-i,l+1-j}(\widetilde y) = g_{(l+r)}^{l+r+1-i,l+r+1-j}(\widetilde y).
    \]
\end{lemma}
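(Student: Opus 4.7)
The plan is to exploit the generating function~\eqref{eq: g via generating function} and check the identity coefficient by coefficient in the auxiliary variables $s := u^2$, $t := v^2$. Writing $P_l(u) = f_l(s)$ with $f_l(s) := \prod_{j=1}^l (s + (x^j)^2)$, one has $u P_l'(u) = 2 s f_l'(s)$, so that~\eqref{eq: g via generating function} becomes
\[
\sum_{p,q=1}^l g_{(l)}^{p,q}\, s^{l-p} t^{l-q} \;=\; \frac{4\bigl(s f_l'(s) f_l(t) - t f_l(s) f_l'(t)\bigr)}{s - t}.
\]
Thus $g_{(l)}^{l+1-i, l+1-j}$ is the coefficient of $s^{i-1} t^{j-1}$ on the right. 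Setting $\widetilde y^a := y^{l+1-a}$ one obtains $f_l(s) = s^l + g_l(s)$ with $g_l(s) := \sum_{a=1}^l \widetilde y^a s^{a-1}$ of degree $\leq l-1$. The lemma is then equivalent to the statement that for $i+j \leq l$ this coefficient is a polynomial in $\widetilde y^1, \ldots, \widetilde y^{i+j}$ that is independent of $l$.

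Next I would expand $f_l = s^l + g_l$ and split the numerator as
\[
s f_l'(s) f_l(t) - t f_l(s) f_l'(t) \;=\; A_1 + A_2 + A_3,
\]
with $A_1 := l\bigl(s^l g_l(t) - g_l(s) t^l\bigr)$, $A_2 := s g_l'(s) t^l - s^l t g_l'(t)$ and $A_3 := s g_l'(s) g_l(t) - g_l(s) t g_l'(t)$ (the pure $s^l t^l$ cross-terms cancel). The telescoping identity
\[
s^l t^{a-1} - s^{a-1} t^l \;=\; s^{a-1} t^{a-1}(s-t)\sum_{m=0}^{l-a} s^{l-a-m} t^m
\]
shows, after a short computation, that every monomial in $(A_1+A_2)/(s-t)$ has total $(s,t)$-degree at least $l-1$. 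Hence for $i+j \leq l$ neither $A_1$ nor $A_2$ contributes to the coefficient of $s^{i-1} t^{j-1}$.

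For $A_3$ a direct expansion yields $A_3 = \sum_{a,b=1}^l (a-b)\, \widetilde y^a \widetilde y^b\, s^{a-1} t^{b-1}$, and pairing $(a,b)$ with $(b,a)$ together with $s^{a-1} t^{b-1} - s^{b-1} t^{a-1} = s^{b-1} t^{b-1}(s-t) \sum_{m=0}^{a-b-1} s^{a-b-1-m} t^m$ (for $a > b$) gives an explicit polynomial form for $A_3/(s-t)$. Inspecting the total degree $a+b-3$ of each summand shows that only pairs with $a+b = i+j+1$ contribute to the coefficient of $s^{i-1} t^{j-1}$; for $i+j \leq l$ this forces $a \leq i+j \leq l$, so only $\widetilde y^1, \ldots, \widetilde y^{i+j}$ appear and the resulting expression is manifestly independent of $l$. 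The analogous computation for $B_{l+r}$ produces the same polynomial in the same $\widetilde y^a$'s, which is exactly the claim of the lemma.

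The main technical point is to keep track of the division by $s-t$ carefully enough to pin down the sharp constraint $a+b = i+j+1$ for $A_3$ and the sharp lower bound on the degree for $A_1 + A_2$; beyond that, the argument is routine bookkeeping. Conceptually, the stabilisation holds because the ``top'' components of $g$ in the reversed basis see only the bottom coefficients of the defining polynomial $f_l$, and those coefficients are the same $\widetilde y^1, \widetilde y^2, \ldots$ for every~$l$.
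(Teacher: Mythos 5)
Your proof is correct and follows essentially the same route as the paper: both reduce the claim to the generating-function identity \eqref{eq: g via generating function} together with a total-degree count showing that the coefficient of $s^{i-1}t^{j-1}$ with $i+j\le l$ only sees the low-order coefficients $\widetilde y^1,\dots,\widetilde y^{i+j}$ of the defining polynomial, which are the same for every $l$. The paper phrases this more briefly as $P_l\equiv P_{l+r}$ modulo $u^{2l-1}$, while your explicit decomposition into $A_1+A_2+A_3$ additionally fills in the bookkeeping for the division by $s-t$ that the paper leaves implicit.
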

\begin{proof}
Denote by $W_l$ the left hand side generating function of Eq~\eqref{eq: g via generating function} written in coordinates $\widetilde y^\bullet$. We are to show that the coefficient of $u^{2(i-1)}v^{2(j-1)}$ in $W_l$ equals the coefficient of $u^{2(i+r-1)}v^{2(j+r-1)}$ in $W_{l+r}$. This is equivalent to
\begin{equation}
    W_l = W_{l+r} \quad \text{modulo } u^{2a}v^{2b} \ \text{ s.t. } a+b \ge 2l.
\end{equation}

However $P_l = u^{2l} + u^{2(l-1)}\widetilde y^{k} + \dots + u^2 \widetilde y^{k-l} + \widetilde y^{k+1-l}$ and therefore
\begin{align*}
    & P_l = P_{l+r} \mid_{\widetilde y^a = 0, \ a > k}.
    \\
    &\qquad \Leftrightarrow \ P_l - P_{l+r} \equiv 0 \quad \text{modulo } u^{2l-1}.
\end{align*}
The lemma follows now from Eq.~\eqref{eq: g via generating function}.
\end{proof}
\begin{corollary}
    Under the conditions of the lemma above we have
    \[
        \eta_{(k,l)}^{l+1-i,l+1-j}(\widetilde y) = \eta_{(k+r,l+r)}^{l+r+1-i,l+r+1-j}(\widetilde y).
    \]
\end{corollary}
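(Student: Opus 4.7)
The plan is to obtain the corollary simply by differentiating the identity of Lemma~\ref{prop: g stabilization} in the unit direction. Recall that by definition one has $\eta_{(k,l)}^{ij}=\p g^{ij}_{(l)}/\p y^{k}$ and, analogously, $\eta_{(k+r,l+r)}^{ij}=\p g^{ij}_{(l+r)}/\p y^{k+r}$. The reindexing used in the lemma is $\widetilde y^{i}=y^{k+1-i}$ on the $(k,l)$ side and $\widetilde y^{i}=y^{k+r+1-i}$ on the $(k+r,l+r)$ side, so in both cases the coordinate $\widetilde y^{1}$ corresponds precisely to the distinguished variable with respect to which $\eta$ is defined, namely $y^{k}$ on the left and $y^{k+r}$ on the right. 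Since the remaining $\widetilde y^{s}$ involve only $y$-variables different from $y^{k}$ (resp.\ $y^{k+r}$), the chain rule identifies $\p/\p\widetilde y^{1}$ on each side with the corresponding $\p/\p y^{k}$ (resp.\ $\p/\p y^{k+r}$), with all other original $y$-variables held fixed.

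With this identification in hand, I apply $\p/\p\widetilde y^{1}$ to both sides of the lemma's identity
\[
    g_{(l)}^{l+1-i,l+1-j}(\widetilde y)\;=\;g_{(l+r)}^{l+r+1-i,l+r+1-j}(\widetilde y).
\]
The left-hand side becomes $\p g_{(l)}^{l+1-i,l+1-j}/\p y^{k}=\eta_{(k,l)}^{l+1-i,l+1-j}$, and the right-hand side becomes $\p g_{(l+r)}^{l+r+1-i,l+r+1-j}/\p y^{k+r}=\eta_{(k+r,l+r)}^{l+r+1-i,l+r+1-j}$, which is exactly the claimed equality. The restriction $i+j\le l$ imposed in the lemma is preserved by the differentiation, so the same range of indices is valid for the corollary.

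I do not anticipate a serious obstacle. The entire content of the statement is that the passage $g\mapsto\eta$ is a single partial derivative along the unit direction, and that this unit direction is naturally matched under the reindexing $\widetilde y^{1}\leftrightarrow y^{k}$ which realises stabilisation in the NN sector. The only bookkeeping point to be careful about is that the reindexing itself depends on the rank parameter $k$, so that the reparametrisations on the $(k,l)$ and $(k+r,l+r)$ sides are formally different changes of variables, which nevertheless agree on the distinguished coordinate $\widetilde y^{1}$; this compatibility is precisely the reason why the corollary follows from the lemma without any further calculation.
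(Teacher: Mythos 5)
Your proposal is correct and is essentially identical to the paper's own argument: the paper likewise obtains the corollary by differentiating the lemma's identity with respect to $\widetilde y^1$, using that $\eta_{(k,l)}$ is $\p g_{(l)}/\p y^k$ and that $y^k=\widetilde y^1$ (resp.\ $y^{k+r}=\widetilde y^1$) under the reindexing. Your extra remark about the two reindexings agreeing on the distinguished coordinate is a correct and harmless elaboration of the same point.
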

\begin{proof}
    This follows by differentiating the statement of the lemma above w.r.t. $\widetilde y^1$.
    The metric $\eta_{(k,l)}$ is obtained from $g_{(l)}$ by differentiating w.r.t. $y^k = \widetilde y^{1}$.
\end{proof}

The condition on the indices of the lemma above hold true for Eq.~\eqref{eq: final expression in NN} because of the bounds on $\gamma+\delta$ and $k$.

It follows from the corollary above that the matrices $\lbrace \p t^{l+1-\gamma} / \p y^{l+1-i} \rbrace_{\gamma,i=1}^{l-k}$ also stabilize. In particular, the flat coordinates $t^{k+1},
\dots,t^{l}$, expressed via $\widetilde y$ coincide for $B_{k,l}$ and $B_{k+r,l+r}$. This can be seen from Eq.~\eqref{eq: flat negative} as well.

Summing up we see that all three multiples of the sum in Eq.~\eqref{eq: final expression in NN} stabilize. This completes the proof.

\subsubsection{Sector PN}
Assume $\alpha,\mu,\nu$ are s.t. $1 \le \alpha \le k$ and $0 \le \mu,\nu \le l-k-1$. The potential $\F_{k,l}$ is at least cubic in its variables and the stabilization in the PN sector is equivalent to
\begin{align}
    &\left[ \frac{\p}{\p t^{k+1+\mu}} \frac{\p^2 \F_{k,l}}{\p t^{k+1-\alpha} \p t^{k+1+\nu}} \right](\widetilde t)
    =
    \left[ \frac{\p}{\p t^{r+k+1+\mu}} \frac{\p^2 \F_{k+r,l+r}}{\p t^{k+r+1-\alpha} \p t^{k+r+1+\nu}} \right](\widetilde t)
    \\
    &
    \left[ \frac{\p}{\p t^{k+1+\mu}} \frac{\p^2 \F_{k,l}}{\p t^{k+1-\alpha} \p t^{k+1+\nu}} \right](\widetilde t)
    =
    \left[ \frac{\p}{\p t^{r+k+1+\mu}} \frac{\p^2 \F_{k+r,l+r}}{\p t^{k+r+1-\alpha} \p t^{k+r+1+\nu}} \right](\widetilde t)
    \label{eq: PN second equation}
\end{align}
where $\widetilde t = \widetilde t(t)$ is the change of the variables given by $t^\epsilon = 4k \cdot \widetilde t^\epsilon$ if $1 \le \epsilon \le k$ and $t^l = 2 \cdot \widetilde t^{k+1-l}$, $t^{k+1} = 2 \cdot \widetilde t^0$,  $t^\nu = 4(l-k) \cdot \widetilde t^{k+1-\nu}$ for $\nu \ge k+1$.

The first equality above is equivalent to
\[
    \left[
    \frac{\p }{\p t^{k+1-\alpha}} \frac{g_{(l)}^{l-\mu,l-\nu}}{\widetilde d_{l-\mu} + \widetilde d_{l - \nu}}
    \right](\widetilde t)
    =
    \left[
    \frac{\p }{\p t^{k+r+1-\alpha}} \frac{g_{(l+r)}^{l+r-\mu,l+r-\nu}}{\widetilde d_{l+r-\mu} + \widetilde d_{l+r - \nu}}
    \right](\widetilde t)
\]

For the denominator we have
\[
 \widetilde d_{l-\mu} + \widetilde d_{l - \nu}
 = \frac{1+\mu+ \nu}{l-k}
 = \widetilde d_{l+r-\mu} + \widetilde d_{l+r-\nu}
\]
and we consider the numerator in details.

Recall Convention~\ref{convention on indices}.
Compute
\begin{align}\label{eq: final expression in PN}
    \frac{\p}{\p t^{k+1-\alpha}} g_{(l)}^{l-\mu,l-\nu}
    =
    \sum_{p=1}^{\nu+1}\sum_{q=1}^{\mu+1}
    \sum_{d=l-k+\alpha}^l
    \frac{\p y^{l+1-d}}{\p t^{k+1-\alpha}} \frac{\p t^{l-\nu}}{\p y^{l+1-p}} \frac{\p t^{l-\mu}}{\p y^{l+1-q}} \frac{\p g^{l+1-p,l+1-q}_{(l)}}{\p y^{l+1-d}}.
\end{align}
Due to the quasihomogeneity of $g$ all the summands above with $d \ge p+q-(l+1)$ are zero.
It was proved in Section~\ref{section: NN proof} that $\p t^{l-\nu} / \p t^{l+1-p}$ and $\p t^{l-\mu} / \p t^{l+1-q}$ stabilize. It was proved in \cite{BDbN} that $\p y^{l+1-d} / \p t^{k+1-\alpha}$ stabilize too. It remains to consider the fourth multiple above.

The expression of Eq.~\eqref{eq: final expression in PN} is zero by Proposition~\ref{prop: 3rd derivatives zeros} unless $l-\nu + l-\mu \le k+l+1$.
For this range we have $p+q \le l+1-k$.

\begin{lemma}
    Let $\widetilde y$ be as in Section~\ref{section: NN proof} and $p,q$ be s.t. $p+q < l+1$ then
    \[
      \frac{\p g_{(l)}^{l+1-p,l+1-q}}{\p y^{l+1-d}} (\widetilde y)
      =
      \frac{\p g_{(l+r)}^{l+r+1-p,l+r+1-q}}{\p y^{l+r+1-d}} (\widetilde y)
    \]
    as the functions of $\widetilde y$.
\end{lemma}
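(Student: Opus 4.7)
The plan is to derive this lemma as a direct corollary of Lemma~\ref{prop: g stabilization} by differentiation, exploiting the fact that the substitutions $\widetilde y^i = y^{k+1-i}$ on the $(k,l)$ side and $\widetilde y^i = y^{k+r+1-i}$ on the $(k+r,l+r)$ side are merely relabelings of indices. Under these relabelings,
\[
\frac{\p}{\p y^{l+1-d}} = \frac{\p}{\p \widetilde y^{k-l+d}} \qquad \text{and} \qquad \frac{\p}{\p y^{l+r+1-d}} = \frac{\p}{\p \widetilde y^{k-l+d}},
\]
the second identity being read inside the $(k+r, l+r)$ coordinate system. In the ambient computation $d \ge l-k+\alpha \ge l-k+1$ and $d \le l$, so the index $k-l+d$ lies in $\{1, \ldots, k\}$, which is inside the common range $\{k+1-l, \ldots, k\}$ of $\widetilde y$-coordinates shared by the two systems and disjoint from $\{k+1, \ldots, k+r\}$.

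Next I would invoke Lemma~\ref{prop: g stabilization} with $i = p$, $j = q$: the hypothesis $p + q < l+1$ is exactly the condition $i + j \le l$ required there, giving
\[
g_{(l)}^{l+1-p, l+1-q}(\widetilde y) = g_{(l+r)}^{l+r+1-p, l+r+1-q}(\widetilde y),
\]
where the right-hand side is understood as $g_{(l+r)}^{l+r+1-p, l+r+1-q}$ expressed in the $\widetilde y$-coordinates of the $(k+r,l+r)$ system and then restricted to the common subset by setting $\widetilde y^{k+1} = \cdots = \widetilde y^{k+r} = 0$.

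Finally I would apply $\p/\p \widetilde y^{k-l+d}$ to both sides of this identity. On the left this yields $\p g_{(l)}^{l+1-p, l+1-q}/\p y^{l+1-d}$ as a function of $\widetilde y$. On the right, because $k-l+d \le k$ lies outside the restricted range $\{k+1, \ldots, k+r\}$, the derivative commutes with the restriction, producing $\p g_{(l+r)}^{l+r+1-p, l+r+1-q}/\p y^{l+r+1-d}$ evaluated at the same $\widetilde y$, which is precisely the claimed equality.

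All the genuine content of the argument is already encoded in Lemma~\ref{prop: g stabilization}; the only things to verify here are the index bookkeeping and the fact that the coordinate with respect to which we differentiate is not one of those set to zero. I anticipate no substantive obstacle beyond these routine checks.
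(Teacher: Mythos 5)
Your proof is correct, and it reaches the result by a genuinely (if mildly) different route than the paper. The paper's own proof is a one-line appeal to the generating-function identity \eqref{eq: g via generating function}: implicitly, one applies $\p/\p y^{l+1-d}$ and $\p/\p y^{l+r+1-d}$ to $W_l$ and $W_{l+r}$ respectively and compares low-degree coefficients exactly as in the proof of Lemma~\ref{prop: g stabilization}, using $P_l\equiv P_{l+r}$ modulo high powers of $u$. You instead take the already-established Lemma~\ref{prop: g stabilization} as a black box and differentiate it with respect to the common coordinate $\widetilde y^{\,k-l+d}$, which under the relabelings $\widetilde y^i=y^{k+1-i}$ and $\widetilde y^i=y^{k+r+1-i}$ is precisely $\p/\p y^{l+1-d}$ on one side and $\p/\p y^{l+r+1-d}$ on the other. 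Both arguments rest on the same underlying stabilization of the generating function; yours buys economy (no repetition of the coefficient comparison) and makes explicit the two bookkeeping points the paper leaves silent: that the hypothesis $p+q<l+1$ is exactly the hypothesis $i+j\le l$ of Lemma~\ref{prop: g stabilization}, and that the coordinate being differentiated lies in the range common to both coordinate systems, so differentiation commutes with discarding the extra coordinates $\widetilde y^{\,k+1},\dots,\widetilde y^{\,k+r}$. This last point is in fact even cleaner than you state: the proof of Lemma~\ref{prop: g stabilization} shows that for $i+j\le l$ the component $g_{(l+r)}^{l+r+1-i,l+r+1-j}$ does not involve $\widetilde y^{\,k+1},\dots,\widetilde y^{\,k+r}$ at all (those variables enter $P_{l+r}$ only through monomials of $u$-degree at least $2l$, which cannot contribute to coefficients of total degree at most $2(i+j)-4$), so no restriction to a subvariety is actually taking place and differentiating the polynomial identity is unconditionally legitimate. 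The only caveat worth recording is that your verification that $k-l+d\in\{1,\dots,k\}$ uses the bounds on $d$ coming from the ambient sum \eqref{eq: final expression in PN} rather than from the lemma's own hypotheses, which do not constrain $d$; this matches how the lemma is used, but strictly speaking the lemma as stated needs $1\le d\le l$ for your reading of the derivative index to apply.
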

\begin{proof}
    This follows immediately from Eq.~\eqref{eq: g via generating function}.
\end{proof}

The partial derivatives in the lemma above are linear functions in $y^{k+1},\dots,y^l$. The change of the variables $y \to \widetilde y$ in this range is induced by $t \to \widetilde t$.
This lemma shows that all the four multiples in Eq~\eqref{eq: final expression in PN} stabilize in $\widetilde t$ coordinates.

The proof of Eq.~\eqref{eq: PN second equation} is skipped because it goes completely parallel.

\subsection{Proof of Theorem~\ref{theorem: consistent}}\label{section: proof 2}
Part (a) follows immediately from \cite{BDbN}[Theorem~6.3] and Corollary~\ref{corollary: Bk in Bkl}.

To prove part (b) we have to show that $\p_\gamma(\p_\alpha\p_\beta f) = \p_\alpha (\p_\beta \p_\gamma f)$ for any $\alpha,\beta,\gamma \in \ZZ$.

Being the potential of a Dubrovin--Frobenius manifold, $\F_{k,l}$ satisfies
\begin{equation}
    \sum_{\nu,\mu} \frac{\p^3 \F_{k,l}}{\p v^\sigma \p v^\gamma \p v^\nu} \eta^{\nu \mu} \frac{\p^3 \F_{k,l}}{\p v^\mu \p v^\alpha \p v^\beta}
    =
    \sum_{\nu,\mu} \frac{\p^3 \F_{k,l}}{\p v^\alpha \p v^\gamma \p v^\nu} \eta^{\nu \mu} \frac{\p^3 \F_{k,l}}{\p v^\mu \p v^\sigma \p v^\beta}
\end{equation}
for any fixed indices $\alpha,\beta,\sigma,\gamma$.

For given $\alpha,\beta,\gamma$ let $k$ and $l$ be s.t. the conditions of Theorem~\ref{theorem: consistent} hold true for any pair of indices from the three given.
Express the partial derivative $\p_\gamma (\p_\alpha \p_\beta f)$:
\begin{align*}
    \p_\gamma (\p_\alpha \p_\beta f) &= \sum_{\nu,\mu} \p_\gamma \p_1 \p_\nu f \cdot \eta^{\nu \mu} \cdot \frac{\p^3 \F_{k,l}}{\p v^\mu \p v^\alpha \p v^\beta} \mid_{v^\epsilon = \eta^{\epsilon\omega} \p_1\p_\omega f}
    \\
    & = \sum_{\nu,\mu} \p_1 \left( \frac{\p^2 \F_{k,l}}{\p v^\gamma \p v^\nu} \mid_{v^\epsilon = \eta^{\epsilon\omega} \p_1\p_\omega f} \right) \cdot \eta^{\nu \mu} \cdot \frac{\p^3 \F_{k,l}}{\p v^\mu \p v^\alpha \p v^\beta} \mid_{v^\epsilon = \eta^{\epsilon\omega} \p_1\p_\omega f}
    \\
    & = \sum_{\delta,\sigma} \p_1 \p_1\p_\delta f \cdot \eta^{\delta\sigma} \left( \sum_{\nu,\mu} \frac{\p^3 \F_{k,l}}{\p v^\sigma \p v^\gamma \p v^\nu}  \cdot \eta^{\nu \mu} \cdot \frac{\p^3 \F_{k,l}}{\p v^\mu \p v^\alpha \p v^\beta} \right) \mid_{v^\epsilon = \eta^{\epsilon\omega} \p_1\p_\omega f}.
\end{align*}
The expression obtained is symmetric in $\alpha,\beta,\gamma$ by WDVV equation above.

For part (c) we have to show that
\begin{equation*}
    \frac{\p^2 \F_{k,l}}{\p v^\alpha \p v^\beta} = \frac{\p^2 \F_{k,l}}{\p v^\alpha \p v^\beta} \mid_{v^\gamma = \eta^{\gamma\delta} \p_1 \p_\delta \F_{k,l}}.
\end{equation*}
The substitution on the RHS is
\[
 v^\gamma = \eta^{\gamma\delta} \frac{\p^2 \F_{k,l}}{\p v^1 \p v^\delta} = v^\gamma
\]
and the statement follows trivially.

To prove part (d) we make use of Propositions~\ref{proposition: positive to negative} and \ref{proposition: positive rescaling}. We show that Eq.~\eqref{eq: PDEs} with both negative indices are obtained from Eq.~\eqref{eq: PDEs} with both positive indices by the polynomial change of the initial condition data.

Consider any $\alpha,\beta > 0$ and let $k$, be s.t. $ k+1 \ge \alpha+\beta$. Then the positive flows can be written by
\begin{align*}
    & \p_\alpha \p_\beta f = \frac{\p^2 \F_{k,2k}}{\p t^{k+1-\alpha} \p t^{k+1-\beta}} \mid_{t^{k+1-\gamma} = \eta^{\gamma\delta} \p_1 \p_\delta f} 
    =
    \left( (t^l)^{2(\alpha+\beta-1)}\frac{\p^2 \F_{k,2k}}{\p t^{k+1-\alpha}\p t^{k+1-\beta}}  
    \right)
    \mid_{ t^{\gamma} = 4k \p_1 \p_{k+1-\gamma}f \cdot (\p_1 \p_{1-k} f)^{2 \gamma}}
    \\
    &=
    \left( \frac{\p^2 \F_{k,2k}}{\p t^{k+1+\alpha} \p t^{k+1+\beta}} (\widehat t) \right) 
    \mid_{ t^{\gamma} = 4k \p_1 \p_{k+1-\gamma}f \cdot (\p_1 \p_{1-k} f)^{2 \gamma}}
    =
    \p_{1-\alpha}\p_{1-\beta} f \mid_{ \p_1\p_\sigma f = \widehat t^\sigma(4k \p_1 \p_{k+1-\gamma}f \cdot  (\p_1 \p_{1-k} f)^{2 \gamma})}.
\end{align*}
where we have applied Proposition~\ref{proposition: positive rescaling} on the first line, Proposition~\ref{proposition: positive to negative} on the second line and also the polynomials $\widehat t^\sigma$, giving the change of the variables $\widehat t^\sigma = \widehat t^\sigma(t)$. Note that is the stabilization conditions hold true for $\alpha,\beta >0$, they also hold true for $1-\alpha,1-\beta \le 0$. Thus we got the relation between the positive flows $\p_\alpha \p_\beta f$ and the negative flows $\p_{1-\alpha} \p_{1-\beta} f$. This relation obtained by subtituting the initial data $\left\{ \p_1\p_\gamma f\right\}$.

\subsection{Examples}
Let $f_{\alpha,\beta} := \frac{\p^2 f}{\p t_\alpha \p t_\beta}$ as in Introduction.
The substitution 
\begin{align*}
    & f_{1,1}= \frac{f_{1,-4}}{2 f_{1,-5}}, \quad f_{1,2}= \frac{f_{1,-3}}{2 f_{1,-5}}-\frac{f_{1,-4}^2}{2 f_{1,-5}^2}, \quad f_{1,3}= \frac{f_{1,-4}^3}{2 f_{1,-5}^3}-\frac{f_{1,-3} f_{1,-4}}{f_{1,-5}^2}+\frac{f_{1,-2}}{2 f_{1,-5}},
    \\
    & f_{1,4}= -\frac{f_{1,-4}^4}{2 f_{1,-5}^4}+\frac{3 f_{1,-3} f_{1,-4}^2}{2 f_{1,-5}^3}-\frac{f_{1,-2} f_{1,-4}}{f_{1,-5}^2}+\frac{f_{1,-1}}{2 f_{1,-5}}-\frac{f_{1,-3}^2}{2 f_{1,-5}^2}, 
    \\
    &f_{1,5}= \frac{f_{1,-4}^5}{2 f_{1,-5}^5}-\frac{2 f_{1,-3} f_{1,-4}^3}{f_{1,-5}^4}+\frac{3 f_{1,-2} f_{1,-4}^2}{2 f_{1,-5}^3}+\frac{3 f_{1,-3}^2 f_{1,-4}}{2 f_{1,-5}^3}-\frac{f_{1,-1} f_{1,-4}}{f_{1,-5}^2}+\frac{f_{1,0}}{2 f_{1,-5}}-\frac{f_{1,-3} f_{1,-2}}{f_{1,-5}^2}
\end{align*}
transforms the flows
\begin{align*}
 & f_{0,0}=\frac{f_{1,-4}}{2 f_{1,-5}}, 
 \quad f_{0,-1}=\frac{f_{1,-3}}{2 f_{1,-5}}-\frac{f_{1,-4}^2}{2 f_{1,-5}^2},
 \quad f_{0,-2}=\frac{f_{1,-4}^3}{2 f_{1,-5}^3}-\frac{f_{1,-3} f_{1,-4}}{f_{1,-5}^2}+\frac{f_{1,-2}}{2 f_{1,-5}},
 \\
 & \qquad f_{0,-3}=-\frac{f_{1,-4}^4}{2 f_{1,-5}^4}+\frac{3 f_{1,-3} f_{1,-4}^2}{2 f_{1,-5}^3}-\frac{f_{1,-2} f_{1,-4}}{f_{1,-5}^2}+\frac{f_{1,-1}}{2 f_{1,-5}}-\frac{f_{1,-3}^2}{2 f_{1,-5}^2},
 \\
 & f_{0,-4}=\frac{f_{1,-4}^5}{2 f_{1,-5}^5}-\frac{2 f_{1,-3} f_{1,-4}^3}{f_{1,-5}^4}+\frac{3 f_{1,-2} f_{1,-4}^2}{2 f_{1,-5}^3}+\frac{3 f_{1,-3}^2 f_{1,-4}}{2 f_{1,-5}^3}-\frac{f_{1,-1} f_{1,-4}}{f_{1,-5}^2}+\frac{f_{1,0}}{2 f_{1,-5}}-\frac{f_{1,-3} f_{1,-2}}{f_{1,-5}^2},
 \end{align*}
 into the flows $f_{1,\alpha} = f_{1,\alpha}$ with $\alpha =1,\dots,5$ and the flows
 \begin{align*}
 &f_{-1,-1}=\frac{7 f_{1,-4}^3}{6 f_{1,-5}^3}-\frac{3 f_{1,-3} f_{1,-4}}{2 f_{1,-5}^2}+\frac{f_{1,-2}}{2 f_{1,-5}},
 \\
 & f_{-1,-2}=-\frac{2 f_{1,-4}^4}{f_{1,-5}^4}+\frac{4 f_{1,-3} f_{1,-4}^2}{f_{1,-5}^3}-\frac{3 f_{1,-2} f_{1,-4}}{2 f_{1,-5}^2}+\frac{f_{1,-1}}{2 f_{1,-5}}-\frac{f_{1,-3}^2}{f_{1,-5}^2},
 \\
 & f_{-1,-3}=\frac{3 f_{1,-4}^5}{f_{1,-5}^5}-\frac{17 f_{1,-3} f_{1,-4}^3}{2 f_{1,-5}^4}+\frac{4 f_{1,-2} f_{1,-4}^2}{f_{1,-5}^3}+\frac{9 f_{1,-3}^2 f_{1,-4}}{2 f_{1,-5}^3}-\frac{3 f_{1,-1} f_{1,-4}}{2 f_{1,-5}^2}+\frac{f_{1,0}}{2 f_{1,-5}}-\frac{2 f_{1,-3} f_{1,-2}}{f_{1,-5}^2}
\end{align*}
into those given in Eq.~\eqref{eq: first positive flows}.

\end{document}